
\documentclass[11pt,onecolumn,draftcls]{IEEEtran}
\usepackage{amssymb}
\usepackage{amsmath}
\usepackage{amsfonts}
\usepackage{float}
\usepackage{graphics}
\usepackage[tight,footnotesize]{subfigure}
\usepackage{rotating}
\usepackage{algorithm}
\usepackage{algorithmic}
\usepackage{enumerate}

\newcommand*{\QEDA}{\hfill\ensuremath{\blacksquare}}

\setcounter{MaxMatrixCols}{10}

\DeclareMathSizes{10}{9}{7}{5}
\newtheorem{theo}{Theorem}

\newtheorem{theorem}{Theorem}

\newtheorem{axiom}[theorem]{Axiom}

\newtheorem{conjecture}[theorem]{Conjecture}
\newtheorem{corollary}[theorem]{Corollary}

\newtheorem{definition}[theorem]{Definition}
\newtheorem{example}[theorem]{Example}
\newtheorem{exercise}[theorem]{Exercise}
\newtheorem{lemma}[theorem]{Lemma}

\newtheorem{proposition}[theorem]{Proposition}
\newtheorem{remark}{Remark}


\typeout{TCILATEX Macros for Scientific Word and Scientific WorkPlace 5.5 <06 Oct 2005>.}
\typeout{NOTICE:  This macro file is NOT proprietary and may be 
freely copied and distributed.}
\makeatletter

\ifx\pdfoutput\relax\let\pdfoutput=\undefined\fi
\newcount\msipdfoutput
\ifx\pdfoutput\undefined
\else
 \ifcase\pdfoutput
 \else 
    \msipdfoutput=1
    \ifx\paperwidth\undefined
    \else
      \ifdim\paperheight=0pt\relax
      \else
        \pdfpageheight\paperheight
      \fi
      \ifdim\paperwidth=0pt\relax
      \else
        \pdfpagewidth\paperwidth
      \fi
    \fi
  \fi  
\fi

%

%
\newcount\@hour\newcount\@minute\chardef\@x10\chardef\@xv60
\def\tcitime{
\def\@time{%
  \@minute\time\@hour\@minute\divide\@hour\@xv
  \ifnum\@hour<\@x 0\fi\the\@hour:%
  \multiply\@hour\@xv\advance\@minute-\@hour
  \ifnum\@minute<\@x 0\fi\the\@minute
  }}%


\def\x@hyperref#1#2#3{%
   \catcode`\~ = 12
   \catcode`\$ = 12
   \catcode`\_ = 12
   \catcode`\# = 12
   \catcode`\& = 12
   \catcode`\% = 12
   \y@hyperref{#1}{#2}{#3}%
}

\def\y@hyperref#1#2#3#4{%
   #2\ref{#4}#3
   \catcode`\~ = 13
   \catcode`\$ = 3
   \catcode`\_ = 8
   \catcode`\# = 6
   \catcode`\& = 4
   \catcode`\% = 14
}

\@ifundefined{hyperref}{\let\hyperref\x@hyperref}{}
\@ifundefined{msihyperref}{\let\msihyperref\x@hyperref}{}

\@ifundefined{qExtProgCall}{\def\qExtProgCall#1#2#3#4#5#6{\relax}}{}
%
%
%
%
\def\QCTOpt[#1]#2{%
  \def\QCTOptB{#1}
  \def\QCTOptA{#2}
}
\def\QCTNOpt#1{%
  \def\QCTOptA{#1}
  \let\QCTOptB\empty
}
\def\Qct{%
  \@ifnextchar[{%
    \QCTOpt}{\QCTNOpt}
}
\def\QCBOpt[#1]#2{%
  \def\QCBOptB{#1}%
  \def\QCBOptA{#2}%
}
\def\QCBNOpt#1{%
  \def\QCBOptA{#1}%
  \let\QCBOptB\empty
}
\def\Qcb{%
  \@ifnextchar[{%
    \QCBOpt}{\QCBNOpt}%
}
\def\PrepCapArgs{%
  \ifx\QCBOptA\empty
    \ifx\QCTOptA\empty
      {}%
    \else
      \ifx\QCTOptB\empty
        {\QCTOptA}%
      \else
        [\QCTOptB]{\QCTOptA}%
      \fi
    \fi
  \else
    \ifx\QCBOptA\empty
      {}%
    \else
      \ifx\QCBOptB\empty
        {\QCBOptA}%
      \else
        [\QCBOptB]{\QCBOptA}%
      \fi
    \fi
  \fi
}
\newcount\GRAPHICSTYPE
\GRAPHICSTYPE=\z@
\def\GRAPHICSPS#1{%
 \ifcase\GRAPHICSTYPE
   \special{ps: #1}%
 \or
   \special{language "PS", include "#1"}%
 \fi
}%
%
%
%

\def\graffile#1#2#3#4{%
    \bgroup
	   \@inlabelfalse
       \leavevmode
       \@ifundefined{bbl@deactivate}{\def~{\string~}}{\activesoff}%
        \raise -#4 \BOXTHEFRAME{%
           \hbox to #2{\raise #3\hbox to #2{\null #1\hfil}}}%
    \egroup
}%
%
\def\draftbox#1#2#3#4{%
 \leavevmode\raise -#4 \hbox{%
  \frame{\rlap{\protect\tiny #1}\hbox to #2%
   {\vrule height#3 width\z@ depth\z@\hfil}%
  }%
 }%
}%
\newcount\@msidraft
\@msidraft=\z@
\let\nographics=\@msidraft
\newif\ifwasdraft
\wasdraftfalse

\def\GRAPHIC#1#2#3#4#5{%
   \ifnum\@msidraft=\@ne\draftbox{#2}{#3}{#4}{#5}%
   \else\graffile{#1}{#3}{#4}{#5}%
   \fi
}
\def\addtoLaTeXparams#1{%
    \edef\LaTeXparams{\LaTeXparams #1}}%
%

\newif\ifBoxFrame \BoxFramefalse
\newif\ifOverFrame \OverFramefalse
\newif\ifUnderFrame \UnderFramefalse

\def\BOXTHEFRAME#1{%
   \hbox{%
      \ifBoxFrame
         \frame{#1}%
      \else
         {#1}%
      \fi
   }%
}

\def\doFRAMEparams#1{\BoxFramefalse\OverFramefalse\UnderFramefalse\readFRAMEparams#1\end}%
\def\readFRAMEparams#1{%
 \ifx#1\end%
  \let\next=\relax
  \else
  \ifx#1i\dispkind=\z@\fi
  \ifx#1d\dispkind=\@ne\fi
  \ifx#1f\dispkind=\tw@\fi
  \ifx#1t\addtoLaTeXparams{t}\fi
  \ifx#1b\addtoLaTeXparams{b}\fi
  \ifx#1p\addtoLaTeXparams{p}\fi
  \ifx#1h\addtoLaTeXparams{h}\fi
  \ifx#1X\BoxFrametrue\fi
  \ifx#1O\OverFrametrue\fi
  \ifx#1U\UnderFrametrue\fi
  \ifx#1w
    \ifnum\@msidraft=1\wasdrafttrue\else\wasdraftfalse\fi
    \@msidraft=\@ne
  \fi
  \let\next=\readFRAMEparams
  \fi
 \next
 }%
%

\def\IFRAME#1#2#3#4#5#6{%
      \bgroup
      \let\QCTOptA\empty
      \let\QCTOptB\empty
      \let\QCBOptA\empty
      \let\QCBOptB\empty
      #6%
      \parindent=0pt
      \leftskip=0pt
      \rightskip=0pt
      \setbox0=\hbox{\QCBOptA}%
      \@tempdima=#1\relax
      \ifOverFrame
          \typeout{This is not implemented yet}%
          \show\HELP
      \else
         \ifdim\wd0>\@tempdima
            \advance\@tempdima by \@tempdima
            \ifdim\wd0 >\@tempdima
               \setbox1 =\vbox{%
                  \unskip\hbox to \@tempdima{\hfill\GRAPHIC{#5}{#4}{#1}{#2}{#3}\hfill}%
                  \unskip\hbox to \@tempdima{\parbox[b]{\@tempdima}{\QCBOptA}}%
               }%
               \wd1=\@tempdima
            \else
               \textwidth=\wd0
               \setbox1 =\vbox{%
                 \noindent\hbox to \wd0{\hfill\GRAPHIC{#5}{#4}{#1}{#2}{#3}\hfill}\\%
                 \noindent\hbox{\QCBOptA}%
               }%
               \wd1=\wd0
            \fi
         \else
            \ifdim\wd0>0pt
              \hsize=\@tempdima
              \setbox1=\vbox{%
                \unskip\GRAPHIC{#5}{#4}{#1}{#2}{0pt}%
                \break
                \unskip\hbox to \@tempdima{\hfill \QCBOptA\hfill}%
              }%
              \wd1=\@tempdima
           \else
              \hsize=\@tempdima
              \setbox1=\vbox{%
                \unskip\GRAPHIC{#5}{#4}{#1}{#2}{0pt}%
              }%
              \wd1=\@tempdima
           \fi
         \fi
         \@tempdimb=\ht1
         \advance\@tempdimb by -#2
         \advance\@tempdimb by #3
         \leavevmode
         \raise -\@tempdimb \hbox{\box1}%
      \fi
      \egroup%
}%
%
\def\DFRAME#1#2#3#4#5{%
  \vspace\topsep
  \hfil\break
  \bgroup
     \leftskip\@flushglue
	 \rightskip\@flushglue
	 \parindent\z@
	 \parfillskip\z@skip
     \let\QCTOptA\empty
     \let\QCTOptB\empty
     \let\QCBOptA\empty
     \let\QCBOptB\empty
	 \vbox\bgroup
        \ifOverFrame 
           #5\QCTOptA\par
        \fi
        \GRAPHIC{#4}{#3}{#1}{#2}{\z@}%
        \ifUnderFrame 
           \break#5\QCBOptA
        \fi
	 \egroup
  \egroup
  \vspace\topsep
  \break
}%
%
\def\FFRAME#1#2#3#4#5#6#7{%
  \@ifundefined{floatstyle}
    {
     \begin{figure}[#1]%
    }
    {
	 \ifx#1h
      \begin{figure}[H]%
	 \else
      \begin{figure}[#1]%
	 \fi
	}
  \let\QCTOptA\empty
  \let\QCTOptB\empty
  \let\QCBOptA\empty
  \let\QCBOptB\empty
  \ifOverFrame
    #4
    \ifx\QCTOptA\empty
    \else
      \ifx\QCTOptB\empty
        \caption{\QCTOptA}%
      \else
        \caption[\QCTOptB]{\QCTOptA}%
      \fi
    \fi
    \ifUnderFrame\else
      \label{#5}%
    \fi
  \else
    \UnderFrametrue%
  \fi
  \begin{center}\GRAPHIC{#7}{#6}{#2}{#3}{\z@}\end{center}%
  \ifUnderFrame
    #4
    \ifx\QCBOptA\empty
      \caption{}%
    \else
      \ifx\QCBOptB\empty
        \caption{\QCBOptA}%
      \else
        \caption[\QCBOptB]{\QCBOptA}%
      \fi
    \fi
    \label{#5}%
  \fi
  \end{figure}%
 }%
%
%
%
%
%
\newcount\dispkind%

\def\makeactives{
  \catcode`\"=\active
  \catcode`\;=\active
  \catcode`\:=\active
  \catcode`\'=\active
  \catcode`\~=\active
}
\bgroup
   \makeactives
   \gdef\activesoff{%
      \def"{\string"}%
      \def;{\string;}%
      \def:{\string:}%
      \def'{\string'}%
      \def~{\string~}%
    }
\egroup

\def\FRAME#1#2#3#4#5#6#7#8{%
 \bgroup
 \ifnum\@msidraft=\@ne
   \wasdrafttrue
 \else
   \wasdraftfalse%
 \fi
 \def\LaTeXparams{}%
 \dispkind=\z@
 \def\LaTeXparams{}%
 \doFRAMEparams{#1}%
 \ifnum\dispkind=\z@\IFRAME{#2}{#3}{#4}{#7}{#8}{#5}\else
  \ifnum\dispkind=\@ne\DFRAME{#2}{#3}{#7}{#8}{#5}\else
   \ifnum\dispkind=\tw@
    \edef\@tempa{\noexpand\FFRAME{\LaTeXparams}}%
    \@tempa{#2}{#3}{#5}{#6}{#7}{#8}%
    \fi
   \fi
  \fi
  \ifwasdraft\@msidraft=1\else\@msidraft=0\fi{}%
  \egroup
 }%
%

\def\TEXUX#1{"texux"}

%
%
%
%
%
%
%
%
%
%

%
\long\def\QQQ#1#2{%
     \long\expandafter\def\csname#1\endcsname{#2}}%
\@ifundefined{QTP}{\def\QTP#1{}}{}
\@ifundefined{QEXCLUDE}{\def\QEXCLUDE#1{}}{}
\@ifundefined{Qlb}{}{}
\@ifundefined{Qlt}{}{}
\long\def\QQA#1#2{}%
\def\QTR#1#2{{\csname#1\endcsname {#2}}}%

%
%
\def\EXPAND#1[#2]#3{}%
\def\NOEXPAND#1[#2]#3{}%
\def\LaTeXparent#1{}%
\def\ChildStyles#1{}%
\def\ChildDefaults#1{}%
\def\QTagDef#1#2#3{}%

\@ifundefined{correctchoice}{}{}
\@ifundefined{HTML}{\def\HTML#1{\relax}}{}
\@ifundefined{TCIIcon}{\def\TCIIcon#1#2#3#4{\relax}}{}
\if@compatibility
  \typeout{Not defining UNICODE  U or CustomNote commands for LaTeX 2.09.}
\else
  \providecommand{\UNICODE}[2][]{\protect\rule{.1in}{.1in}}
  \providecommand{\U}[1]{\protect\rule{.1in}{.1in}}
  
\fi

\@ifundefined{lambdabar}{
      
   }{}

%
\@ifundefined{StyleEditBeginDoc}{}{}
%
\def\QQfnmark#1{\footnotemark}

%
%
\@ifundefined{TCIMAKEINDEX}{}{\makeindex}%
%
\@ifundefined{abstract}{%
 \def\abstract{%
  \if@twocolumn
   \section*{Abstract (Not appropriate in this style!)}%
   \else \small 
   \begin{center}{\bf Abstract\vspace{-.5em}\vspace{\z@}}\end{center}%
   \quotation 
   \fi
  }%
 }{%
 }%
\@ifundefined{endabstract}{\def\endabstract
  {\if@twocolumn\else\endquotation\fi}}{}%
\@ifundefined{maketitle}{\def\maketitle#1{}}{}%
\@ifundefined{affiliation}{\def\affiliation#1{}}{}%
\@ifundefined{proof}{}{}%
\@ifundefined{endproof}{}{}%
\@ifundefined{newfield}{\def\newfield#1#2{}}{}%
\@ifundefined{chapter}{\def\chapter#1{\par(Chapter head:)#1\par }%
 \newcount\c@chapter}{}%
\@ifundefined{part}{\def\part#1{\par(Part head:)#1\par }}{}%
\@ifundefined{section}{\def\section#1{\par(Section head:)#1\par }}{}%
\@ifundefined{subsection}{\def\subsection#1%
 {\par(Subsection head:)#1\par }}{}%
\@ifundefined{subsubsection}{\def\subsubsection#1%
 {\par(Subsubsection head:)#1\par }}{}%
\@ifundefined{paragraph}{\def\paragraph#1%
 {\par(Subsubsubsection head:)#1\par }}{}%
\@ifundefined{subparagraph}{\def\subparagraph#1%
 {\par(Subsubsubsubsection head:)#1\par }}{}%
\@ifundefined{therefore}{}{}%
\@ifundefined{backepsilon}{}{}%
\@ifundefined{yen}{}{}%
\@ifundefined{registered}{%
   \def\registered{\relax\ifmmode{}\r@gistered
                    \else$\m@th\r@gistered$\fi}%
 \def\r@gistered{^{\ooalign
  {\hfil\raise.07ex\hbox{$\scriptstyle\rm\text{R}$}\hfil\crcr
  \mathhexbox20D}}}}{}%
\@ifundefined{Eth}{}{}%
\@ifundefined{eth}{}{}%
\@ifundefined{Thorn}{}{}%
\@ifundefined{thorn}{}{}%
%
\@ifundefined{degree}{}{}%
%
\newdimen\theight
\@ifundefined{Column}{\def\Column{%
 \vadjust{\setbox\z@=\hbox{\scriptsize\quad\quad tcol}%
  \theight=\ht\z@\advance\theight by \dp\z@\advance\theight by \lineskip
  \kern -\theight \vbox to \theight{%
   \rightline{\rlap{\box\z@}}%
   \vss
   }%
  }%
 }}{}%
\@ifundefined{qed}{\def\qed{%
 \ifhmode\unskip\nobreak\fi\ifmmode\ifinner\else\hskip5\p@\fi\fi
 \hbox{\hskip5\p@\vrule width4\p@ height6\p@ depth1.5\p@\hskip\p@}%
 }}{}%
\@ifundefined{cents}{}{}%
\@ifundefined{tciLaplace}{}{}%
\@ifundefined{tciFourier}{}{}%
\@ifundefined{textcurrency}{}{}%
\@ifundefined{texteuro}{}{}%
\@ifundefined{euro}{}{}%
\@ifundefined{textfranc}{}{}%
\@ifundefined{textlira}{}{}%
\@ifundefined{textpeseta}{}{}%
\@ifundefined{miss}{\def\miss{\hbox{\vrule height2\p@ width 2\p@ depth\z@}}}{}%
\@ifundefined{vvert}{}{}
\@ifundefined{tcol}{\def\tcol#1{{\baselineskip=6\p@ \vcenter{#1}} \Column}}{}%
\@ifundefined{dB}{}{}
\@ifundefined{mB}{}{}
\@ifundefined{nB}{}{}
\@ifundefined{note}{}{}%
\def\newfmtname{LaTeX2e}
%
\ifx\fmtname\newfmtname
  \DeclareOldFontCommand{\rm}{\normalfont\rmfamily}{\mathrm}
  \DeclareOldFontCommand{\sf}{\normalfont\sffamily}{\mathsf}
  \DeclareOldFontCommand{\tt}{\normalfont\ttfamily}{\mathtt}
  \DeclareOldFontCommand{\bf}{\normalfont\bfseries}{\mathbf}
  \DeclareOldFontCommand{\it}{\normalfont\itshape}{\mathit}
  \DeclareOldFontCommand{\sl}{\normalfont\slshape}{\@nomath\sl}
  \DeclareOldFontCommand{\sc}{\normalfont\scshape}{\@nomath\sc}
\fi

%

\def\alpha{{\Greekmath 010B}}%
\def\beta{{\Greekmath 010C}}%
\def\gamma{{\Greekmath 010D}}%
\def\delta{{\Greekmath 010E}}%
\def\epsilon{{\Greekmath 010F}}%
\def\zeta{{\Greekmath 0110}}%
\def\eta{{\Greekmath 0111}}%
\def\theta{{\Greekmath 0112}}%
\def\iota{{\Greekmath 0113}}%
\def\kappa{{\Greekmath 0114}}%
\def\lambda{{\Greekmath 0115}}%
\def\mu{{\Greekmath 0116}}%
\def\nu{{\Greekmath 0117}}%
\def\xi{{\Greekmath 0118}}%
\def\pi{{\Greekmath 0119}}%
\def\rho{{\Greekmath 011A}}%
\def\sigma{{\Greekmath 011B}}%
\def\tau{{\Greekmath 011C}}%
\def\upsilon{{\Greekmath 011D}}%
\def\phi{{\Greekmath 011E}}%
\def\chi{{\Greekmath 011F}}%
\def\psi{{\Greekmath 0120}}%
\def\omega{{\Greekmath 0121}}%
\def\varepsilon{{\Greekmath 0122}}%
\def\vartheta{{\Greekmath 0123}}%
\def\varpi{{\Greekmath 0124}}%
\def\varrho{{\Greekmath 0125}}%
\def\varsigma{{\Greekmath 0126}}%
\def\varphi{{\Greekmath 0127}}%

\def\nabla{{\Greekmath 0272}}
\def\FindBoldGroup{%
   {\setbox0=\hbox{$\mathbf{x\global\edef\theboldgroup{\the\mathgroup}}$}}%
}

\def\Greekmath#1#2#3#4{%
    \if@compatibility
        \ifnum\mathgroup=\symbold
           \mathchoice{\mbox{\boldmath$\displaystyle\mathchar"#1#2#3#4$}}%
                      {\mbox{\boldmath$\textstyle\mathchar"#1#2#3#4$}}%
                      {\mbox{\boldmath$\scriptstyle\mathchar"#1#2#3#4$}}%
                      {\mbox{\boldmath$\scriptscriptstyle\mathchar"#1#2#3#4$}}%
        \else
           \mathchar"#1#2#3#4%
        \fi 
    \else 
        \FindBoldGroup
        \ifnum\mathgroup=\theboldgroup 
           \mathchoice{\mbox{\boldmath$\displaystyle\mathchar"#1#2#3#4$}}%
                      {\mbox{\boldmath$\textstyle\mathchar"#1#2#3#4$}}%
                      {\mbox{\boldmath$\scriptstyle\mathchar"#1#2#3#4$}}%
                      {\mbox{\boldmath$\scriptscriptstyle\mathchar"#1#2#3#4$}}%
        \else
           \mathchar"#1#2#3#4%
        \fi     	    
	  \fi}

\newif\ifGreekBold  \GreekBoldfalse
\let\SAVEPBF=\pbf
\def\pbf{\GreekBoldtrue\SAVEPBF}%

\@ifundefined{theorem}{\newtheorem{theorem}{Theorem}}{}
\@ifundefined{lemma}{}{}
\@ifundefined{corollary}{}{}
\@ifundefined{conjecture}{}{}
\@ifundefined{proposition}{}{}
\@ifundefined{axiom}{}{}
\@ifundefined{remark}{\newtheorem{remark}{Remark}}{}
\@ifundefined{example}{}{}
\@ifundefined{exercise}{}{}
\@ifundefined{definition}{\newtheorem{definition}{Definition}}{}

\@ifundefined{mathletters}{%
  \newcounter{equationnumber}  
  \def\mathletters{%
     \addtocounter{equation}{1}
     \edef\@currentlabel{\theequation}%
     \setcounter{equationnumber}{\c@equation}
     \setcounter{equation}{0}%
     \edef\theequation{\@currentlabel\noexpand\alph{equation}}%
  }
  
}{}

\@ifundefined{BibTeX}{%
    \def\BibTeX{{\rm B\kern-.05em{\sc i\kern-.025em b}\kern-.08em
                 T\kern-.1667em\lower.7ex\hbox{E}\kern-.125emX}}}{}%
\@ifundefined{AmS}%
    {\def\AmS{{\protect\usefont{OMS}{cmsy}{m}{n}%
                A\kern-.1667em\lower.5ex\hbox{M}\kern-.125emS}}}{}%
\@ifundefined{AmSTeX}{}{}%
%

\def\@@eqncr{\let\@tempa\relax
    \ifcase\@eqcnt \def\@tempa{& & &}\or \def\@tempa{& &}%
      \else \def\@tempa{&}\fi
     \@tempa
     \if@eqnsw
        \iftag@
           \@taggnum
        \else
           \@eqnnum\stepcounter{equation}%
        \fi
     \fi
     \global\tag@false
     \global\@eqnswtrue
     \global\@eqcnt\z@\cr}

\def\TCItag{\@ifnextchar*{\@TCItagstar}{\@TCItag}}
\def\@TCItag#1{%
    \global\tag@true
    \global\def\@taggnum{(#1)}%
    \global\def\@currentlabel{#1}}
\def\@TCItagstar*#1{%
    \global\tag@true
    \global\def\@taggnum{#1}%
    \global\def\@currentlabel{#1}}
%
%
%
%
%
%
%
%
%
%
%
%
%
%
%
%
%
%
%

\def\tint{\msi@int\textstyle\int}%
\def\tiint{\msi@int\textstyle\iint}%
\def\tiiint{\msi@int\textstyle\iiint}%
\def\tiiiint{\msi@int\textstyle\iiiint}%
\def\tidotsint{\msi@int\textstyle\idotsint}%
\def\toint{\msi@int\textstyle\oint}%

%
%
%
%
%
%
%
%
%
%
%
%
%
%
%

\newtoks\temptoksa
\newtoks\temptoksb
\newtoks\temptoksc

\def\msi@int#1#2{%
 \def\@temp{{#1#2\the\temptoksc_{\the\temptoksa}^{\the\temptoksb}}}%
 \futurelet\@nextcs
 \@int
}

\def\@int{%
   \ifx\@nextcs\limits
      \typeout{Found limits}%
      \temptoksc={\limits}%
	  \let\@next\@intgobble%
   \else\ifx\@nextcs\nolimits
      \typeout{Found nolimits}%
      \temptoksc={\nolimits}%
	  \let\@next\@intgobble%
   \else
      \typeout{Did not find limits or no limits}%
      \temptoksc={}%
      \let\@next\msi@limits%
   \fi\fi
   \@next   
}%

\def\@intgobble#1{%
   \typeout{arg is #1}%
   \msi@limits
}

\def\msi@limits{%
   \temptoksa={}%
   \temptoksb={}%
   \@ifnextchar_{\@limitsa}{\@limitsb}%
}

\def\@limitsa_#1{%
   \temptoksa={#1}%
   \@ifnextchar^{\@limitsc}{\@temp}%
}

\def\@limitsb{%
   \@ifnextchar^{\@limitsc}{\@temp}%
}

\def\@limitsc^#1{%
   \temptoksb={#1}%
   \@ifnextchar_{\@limitsd}{\@temp}%
}

\def\@limitsd_#1{%
   \temptoksa={#1}%
   \@temp
}

\def\dint{\msi@int\displaystyle\int}%
\def\diint{\msi@int\displaystyle\iint}%
\def\diiint{\msi@int\displaystyle\iiint}%
\def\diiiint{\msi@int\displaystyle\iiiint}%
\def\didotsint{\msi@int\displaystyle\idotsint}%
\def\doint{\msi@int\displaystyle\oint}%

\if@compatibility\else
  \RequirePackage{amsmath}
\fi

\def\ExitTCILatex{\makeatother }

\bgroup
\ifx\ds@amstex\relax
   \message{amstex already loaded}\aftergroup\ExitTCILatex
\else
   \@ifpackageloaded{amsmath}%
      {\if@compatibility\message{amsmath already loaded}\fi\aftergroup\ExitTCILatex}
      {}
   \@ifpackageloaded{amstex}%
      {\if@compatibility\message{amstex already loaded}\fi\aftergroup\ExitTCILatex}
      {}
   \@ifpackageloaded{amsgen}%
      {\if@compatibility\message{amsgen already loaded}\fi\aftergroup\ExitTCILatex}
      {}
\fi
\egroup


\typeout{TCILATEX defining AMS-like constructs in LaTeX 2.09 COMPATIBILITY MODE}
%
%
\let\DOTSI\relax
\def\RIfM@{\relax\ifmmode}%
\def\FN@{\futurelet\next}%
\newcount\intno@
\def\iint{\DOTSI\intno@\tw@\FN@\ints@}%
\def\iiint{\DOTSI\intno@\thr@@\FN@\ints@}%
\def\iiiint{\DOTSI\intno@4 \FN@\ints@}%
\def\idotsint{\DOTSI\intno@\z@\FN@\ints@}%
\def\ints@{\findlimits@\ints@@}%
\newif\iflimtoken@
\newif\iflimits@
\def\findlimits@{\limtoken@true\ifx\next\limits\limits@true
 \else\ifx\next\nolimits\limits@false\else
 \limtoken@false\ifx\ilimits@\nolimits\limits@false\else
 \ifinner\limits@false\else\limits@true\fi\fi\fi\fi}%
\def\multint@{\int\ifnum\intno@=\z@\intdots@                          
 \else\intkern@\fi                                                    
 \ifnum\intno@>\tw@\int\intkern@\fi                                   
 \ifnum\intno@>\thr@@\int\intkern@\fi                                 
 \int}
\def\multintlimits@{\intop\ifnum\intno@=\z@\intdots@\else\intkern@\fi
 \ifnum\intno@>\tw@\intop\intkern@\fi
 \ifnum\intno@>\thr@@\intop\intkern@\fi\intop}%
\def\intic@{%
    \mathchoice{\hskip.5em}{\hskip.4em}{\hskip.4em}{\hskip.4em}}%
\def\negintic@{\mathchoice
 {\hskip-.5em}{\hskip-.4em}{\hskip-.4em}{\hskip-.4em}}%
\def\ints@@{\iflimtoken@                                              
 \def\ints@@@{\iflimits@\negintic@
   \mathop{\intic@\multintlimits@}\limits                             
  \else\multint@\nolimits\fi                                          
  \eat@}
 \else                                                                
 \def\ints@@@{\iflimits@\negintic@
  \mathop{\intic@\multintlimits@}\limits\else
  \multint@\nolimits\fi}\fi\ints@@@}%
\def\intkern@{\mathchoice{\!\!\!}{\!\!}{\!\!}{\!\!}}%
\def\plaincdots@{\mathinner{\cdotp\cdotp\cdotp}}%
\def\intdots@{\mathchoice{\plaincdots@}%
 {{\cdotp}\mkern1.5mu{\cdotp}\mkern1.5mu{\cdotp}}%
 {{\cdotp}\mkern1mu{\cdotp}\mkern1mu{\cdotp}}%
 {{\cdotp}\mkern1mu{\cdotp}\mkern1mu{\cdotp}}}%
%
%
%
\def\RIfM@{\relax\protect\ifmmode}
\def\text{\RIfM@\expandafter\text@\else\expandafter\mbox\fi}
\let\nfss@text\text
\def\text@#1{\mathchoice
   {\textdef@\displaystyle\f@size{#1}}%
   {\textdef@\textstyle\tf@size{\firstchoice@false #1}}%
   {\textdef@\textstyle\sf@size{\firstchoice@false #1}}%
   {\textdef@\textstyle \ssf@size{\firstchoice@false #1}}%
   \glb@settings}

\def\textdef@#1#2#3{\hbox{{%
                    \everymath{#1}%
                    \let\f@size#2\selectfont
                    #3}}}
\newif\iffirstchoice@
\firstchoice@true
%
%
\def\Let@{\relax\iffalse{\fi\let\\=\cr\iffalse}\fi}%
\def\vspace@{\def\vspace##1{\crcr\noalign{\vskip##1\relax}}}%
\def\multilimits@{\bgroup\vspace@\Let@
 \baselineskip\fontdimen10 \scriptfont\tw@
 \advance\baselineskip\fontdimen12 \scriptfont\tw@
 \lineskip\thr@@\fontdimen8 \scriptfont\thr@@
 \lineskiplimit\lineskip
 \vbox\bgroup\ialign\bgroup\hfil$\m@th\scriptstyle{##}$\hfil\crcr}%
\def\Sb{_\multilimits@}%
\def\endSb{\crcr\egroup\egroup\egroup}%
\def\Sp{^\multilimits@}%

%
%
%
\newdimen\ex@
\ex@.2326ex
\def\rightarrowfill@#1{$#1\m@th\mathord-\mkern-6mu\cleaders
 \hbox{$#1\mkern-2mu\mathord-\mkern-2mu$}\hfill
 \mkern-6mu\mathord\rightarrow$}%
\def\leftarrowfill@#1{$#1\m@th\mathord\leftarrow\mkern-6mu\cleaders
 \hbox{$#1\mkern-2mu\mathord-\mkern-2mu$}\hfill\mkern-6mu\mathord-$}%
\def\leftrightarrowfill@#1{$#1\m@th\mathord\leftarrow
\mkern-6mu\cleaders
 \hbox{$#1\mkern-2mu\mathord-\mkern-2mu$}\hfill
 \mkern-6mu\mathord\rightarrow$}%
\def\overrightarrow{\mathpalette\overrightarrow@}%
\def\overrightarrow@#1#2{\vbox{\ialign{##\crcr\rightarrowfill@#1\crcr
 \noalign{\kern-\ex@\nointerlineskip}$\m@th\hfil#1#2\hfil$\crcr}}}%

\def\overleftarrow{\mathpalette\overleftarrow@}%
\def\overleftarrow@#1#2{\vbox{\ialign{##\crcr\leftarrowfill@#1\crcr
 \noalign{\kern-\ex@\nointerlineskip}$\m@th\hfil#1#2\hfil$\crcr}}}%
\def\overleftrightarrow{\mathpalette\overleftrightarrow@}%
\def\overleftrightarrow@#1#2{\vbox{\ialign{##\crcr
   \leftrightarrowfill@#1\crcr
 \noalign{\kern-\ex@\nointerlineskip}$\m@th\hfil#1#2\hfil$\crcr}}}%
\def\underrightarrow{\mathpalette\underrightarrow@}%
\def\underrightarrow@#1#2{\vtop{\ialign{##\crcr$\m@th\hfil#1#2\hfil
  $\crcr\noalign{\nointerlineskip}\rightarrowfill@#1\crcr}}}%

\def\underleftarrow{\mathpalette\underleftarrow@}%
\def\underleftarrow@#1#2{\vtop{\ialign{##\crcr$\m@th\hfil#1#2\hfil
  $\crcr\noalign{\nointerlineskip}\leftarrowfill@#1\crcr}}}%
\def\underleftrightarrow{\mathpalette\underleftrightarrow@}%
\def\underleftrightarrow@#1#2{\vtop{\ialign{##\crcr$\m@th
  \hfil#1#2\hfil$\crcr
 \noalign{\nointerlineskip}\leftrightarrowfill@#1\crcr}}}%

\def\qopnamewl@#1{\mathop{\operator@font#1}\nlimits@}
\let\nlimits@\displaylimits
\def\setboxz@h{\setbox\z@\hbox}

\def\varlim@#1#2{\mathop{\vtop{\ialign{##\crcr
 \hfil$#1\m@th\operator@font lim$\hfil\crcr
 \noalign{\nointerlineskip}#2#1\crcr
 \noalign{\nointerlineskip\kern-\ex@}\crcr}}}}

 \def\rightarrowfill@#1{\m@th\setboxz@h{$#1-$}\ht\z@\z@
  $#1\copy\z@\mkern-6mu\cleaders
  \hbox{$#1\mkern-2mu\box\z@\mkern-2mu$}\hfill
  \mkern-6mu\mathord\rightarrow$}
\def\leftarrowfill@#1{\m@th\setboxz@h{$#1-$}\ht\z@\z@
  $#1\mathord\leftarrow\mkern-6mu\cleaders
  \hbox{$#1\mkern-2mu\copy\z@\mkern-2mu$}\hfill
  \mkern-6mu\box\z@$}

\def\projlim{\qopnamewl@{proj\,lim}}
\def\injlim{\qopnamewl@{inj\,lim}}
\def\varinjlim{\mathpalette\varlim@\rightarrowfill@}
\def\varprojlim{\mathpalette\varlim@\leftarrowfill@}
\def\varliminf{\mathpalette\varliminf@{}}
\def\varliminf@#1{\mathop{\underline{\vrule\@depth.2\ex@\@width\z@
   \hbox{$#1\m@th\operator@font lim$}}}}
\def\varlimsup{\mathpalette\varlimsup@{}}
\def\varlimsup@#1{\mathop{\overline
  {\hbox{$#1\m@th\operator@font lim$}}}}

%
%
%
%
%
%
\begingroup \catcode `|=0 \catcode `[= 1
\catcode`]=2 \catcode `\{=12 \catcode `\}=12
\catcode`\\=12 
|gdef|@alignverbatim#1\end{align}[#1|end[align]]
|gdef|@salignverbatim#1\end{align*}[#1|end[align*]]

|gdef|@alignatverbatim#1\end{alignat}[#1|end[alignat]]
|gdef|@salignatverbatim#1\end{alignat*}[#1|end[alignat*]]

|gdef|@xalignatverbatim#1\end{xalignat}[#1|end[xalignat]]
|gdef|@sxalignatverbatim#1\end{xalignat*}[#1|end[xalignat*]]

|gdef|@gatherverbatim#1\end{gather}[#1|end[gather]]
|gdef|@sgatherverbatim#1\end{gather*}[#1|end[gather*]]

|gdef|@gatherverbatim#1\end{gather}[#1|end[gather]]
|gdef|@sgatherverbatim#1\end{gather*}[#1|end[gather*]]

|gdef|@multilineverbatim#1\end{multiline}[#1|end[multiline]]
|gdef|@smultilineverbatim#1\end{multiline*}[#1|end[multiline*]]

|gdef|@arraxverbatim#1\end{arrax}[#1|end[arrax]]
|gdef|@sarraxverbatim#1\end{arrax*}[#1|end[arrax*]]

|gdef|@tabulaxverbatim#1\end{tabulax}[#1|end[tabulax]]
|gdef|@stabulaxverbatim#1\end{tabulax*}[#1|end[tabulax*]]

|endgroup

\def\align{\@verbatim \frenchspacing\@vobeyspaces \@alignverbatim
You are using the "align" environment in a style in which it is not defined.}

\@namedef{align*}{\@verbatim\@salignverbatim
You are using the "align*" environment in a style in which it is not defined.}
\expandafter\let\csname endalign*\endcsname =\endtrivlist

\def\alignat{\@verbatim \frenchspacing\@vobeyspaces \@alignatverbatim
You are using the "alignat" environment in a style in which it is not defined.}

\@namedef{alignat*}{\@verbatim\@salignatverbatim
You are using the "alignat*" environment in a style in which it is not defined.}
\expandafter\let\csname endalignat*\endcsname =\endtrivlist

\def\xalignat{\@verbatim \frenchspacing\@vobeyspaces \@xalignatverbatim
You are using the "xalignat" environment in a style in which it is not defined.}

\@namedef{xalignat*}{\@verbatim\@sxalignatverbatim
You are using the "xalignat*" environment in a style in which it is not defined.}
\expandafter\let\csname endxalignat*\endcsname =\endtrivlist

\def\gather{\@verbatim \frenchspacing\@vobeyspaces \@gatherverbatim
You are using the "gather" environment in a style in which it is not defined.}

\@namedef{gather*}{\@verbatim\@sgatherverbatim
You are using the "gather*" environment in a style in which it is not defined.}
\expandafter\let\csname endgather*\endcsname =\endtrivlist

\def\multiline{\@verbatim \frenchspacing\@vobeyspaces \@multilineverbatim
You are using the "multiline" environment in a style in which it is not defined.}

\@namedef{multiline*}{\@verbatim\@smultilineverbatim
You are using the "multiline*" environment in a style in which it is not defined.}
\expandafter\let\csname endmultiline*\endcsname =\endtrivlist

\def\arrax{\@verbatim \frenchspacing\@vobeyspaces \@arraxverbatim
You are using a type of "array" construct that is only allowed in AmS-LaTeX.}

\def\tabulax{\@verbatim \frenchspacing\@vobeyspaces \@tabulaxverbatim
You are using a type of "tabular" construct that is only allowed in AmS-LaTeX.}

\@namedef{arrax*}{\@verbatim\@sarraxverbatim
You are using a type of "array*" construct that is only allowed in AmS-LaTeX.}
\expandafter\let\csname endarrax*\endcsname =\endtrivlist

\@namedef{tabulax*}{\@verbatim\@stabulaxverbatim
You are using a type of "tabular*" construct that is only allowed in AmS-LaTeX.}
\expandafter\let\csname endtabulax*\endcsname =\endtrivlist


 \def\endequation{%
     \ifmmode\ifinner 
      \iftag@
        \addtocounter{equation}{-1} 
        $\hfil
           \displaywidth\linewidth\@taggnum\egroup \endtrivlist
        \global\tag@false
        \global\@ignoretrue   
      \else
        $\hfil
           \displaywidth\linewidth\@eqnnum\egroup \endtrivlist
        \global\tag@false
        \global\@ignoretrue 
      \fi
     \else   
      \iftag@
        \addtocounter{equation}{-1} 
        \eqno \hbox{\@taggnum}
        \global\tag@false%
        $$\global\@ignoretrue
      \else
        \eqno \hbox{\@eqnnum}
        $$\global\@ignoretrue
      \fi
     \fi\fi
 } 

 \newif\iftag@ \tag@false
 
 \def\TCItag{\@ifnextchar*{\@TCItagstar}{\@TCItag}}
 \def\@TCItag#1{%
     \global\tag@true
     \global\def\@taggnum{(#1)}%
     \global\def\@currentlabel{#1}}
 \def\@TCItagstar*#1{%
     \global\tag@true
     \global\def\@taggnum{#1}%
     \global\def\@currentlabel{#1}}

  \@ifundefined{tag}{
     \def\tag{\@ifnextchar*{\@tagstar}{\@tag}}
     \def\@tag#1{%
         \global\tag@true
         \global\def\@taggnum{(#1)}}
     \def\@tagstar*#1{%
         \global\tag@true
         \global\def\@taggnum{#1}}
  }{}

%
%
%
%
%

\makeatother

\begin{document}

\title{Practical Secrecy using Artificial Noise}
\pubid{}
\specialpapernotice{}
\author{Shuiyin~Liu,~Yi~Hong,~and~Emanuele~Viterbo\thanks{%
S. Liu, Y. Hong and E. Viterbo are with the Department of Electrical and
Computer Systems Engineering, Monash University, Melbourne VIC 3180,
Australia (e-mail: shuiyin.liu, yi.hong, emanuele.viterbo@monash.edu). This
work was performed at the Monash Software Defined Telecommunications Lab and
the authors were supported by the Monash Professional Fellowship, 2013
Monash Faculty of Engineering Seed Funding Scheme, and Australian Research
Council Discovery grants (ARC DP130100336).}}



\maketitle%

\begin{abstract}%

In this paper, we consider the use of artificial noise for secure
communications. We propose the notion of \emph{practical secrecy} as a new
design criterion based on the behavior of the eavesdropper's error
probability $P_{\text{E}}$, as the signal-to-noise ratio goes to infinity.
We then show that the \emph{practical secrecy }can be guaranteed by the
randomly distributed artificial noise with specified power. We show that it
is possible to achieve \emph{practical secrecy} even when the eavesdropper
can afford more antennas than the transmitter.

\end{abstract}%
\vspace{-3mm}

\section{INTRODUCTION}

The \textit{broadcast }characteristic of wireless communication systems
results in enormous challenges in securing transmitted data in the presence
of eavesdroppers. The eavesdropper is commonly assumed to be passive and its
location is unknown to the transmitter. In current wireless systems, secure
communication mainly depends on the network layer cryptographic
technologies. Information theoretic results show that it is possible to
secure the data by employing physical layer strategy \cite{Wyner75}, when
the intended receiver has a better channel than the eavesdropper. The
secrecy capacity is thus defined to measure the difference between the
capacities of the intended user and the eavesdropper \cite{Hellman78}.

A recently proposed physical layer security scheme makes use of artificial
noise to degrade the eavesdropper's reception \cite{Goel08}. The intended
user is unaffected, so that a non-zero secrecy rate is ensured. The approach
assumes a Gaussian artificial noise and requires that the number of
eavesdropper antennas $N_{\text{E}}$ is strictly smaller than the number of
transmitter antennas $N_{\text{A}}$. In this paper we tackle this problem.
To overcome the restriction $N_{\text{E}}<N_{\text{A}}$, we aim at
maximizing the eavesdropper's error probability, defined by $P_{\text{E}}$,
rather than the secrecy rate. Hence, we define the notion of \emph{practical
secrecy}\ as $P_{\text{E}}\rightarrow 1$ exponentially as the number of
receiver antennas $N_{\text{B}}\rightarrow \infty $, for any signal-to-noise
ratio (SNR) at the eavesdropper. The proposed criterion is different from
the \emph{secrecy gain} introduced in \cite{oggerb11c}, where $P_{\text{E}%
}\rightarrow 0$ for high eavesdropper SNR. More importantly, we
propose the \emph{covering ratio} as a fundamental secrecy parameter
which guarantees the convergence of $P_{\text{E}}$ and characterizes
the amount of the artificial noise required. Furthermore, we propose
lattice precoding to improve performance over singular value
decomposition (SVD) precoding used in \cite{Goel08}.

The paper is organized as follows: Section II presents the transmission
model and lattice basics. The SVD precoding and the lattice precoding are
given in Section III. In Section IV, \emph{practical secrecy} is addressed.
Section V presents some simulation results. Some concluding remarks are
drawn in Section VI. Proofs of the theorems are given in Appendix.

\textit{Notation:} Matrices and column vectors are denoted by upper and
lowercase boldface letters, and the transpose, inverse, pseudoinverse of a
matrix $\mathbf{B}$ by $\mathbf{B}^{T}$, $\mathbf{B}^{-1}$, and $\mathbf{B}%
^{\dagger }$, respectively. $X\rightarrow Y$ denotes that random variable $X$
converges to random variable $Y$ in distribution. We use the standard
asymptotic notation $f\left( x\right) =O\left( g\left( x\right) \right) $,
when $\lim \sup_{x\rightarrow \infty }|f(x)/g(x)|<\infty $. $\mathbb{R}$, $%
\mathbb{C}$, $\mathbb{Z}$ and $\mathbb{Z}\left[ i\right] $ represent the
real, complex, integer and complex integer numbers, respectively. \vspace{%
-3mm}

\section{System Model and Lattice Preliminary}

We consider the multiple-input multiple-output (MIMO) wiretap channel using
the \textit{M}-QAM signalling. The precoding and decoding problems in this
system can be easily modeled using lattices \cite{Hochwald05}\cite{mow:IT}.
In what follows, the system model is introduced first, followed by some
lattice preliminaries that are relevant to this paper. \vspace{-3mm}

\subsection{System Model}

Consider a MIMO wiretap system including three terminals:\ a transmitter
Alice, an intended receiver Bob, and a passive eavesdropper Eve, which are
equipped with $N_{\text{A}}$, $N_{\text{B}}$ and $N_{\text{E}}$ antennas,
respectively. Bob and Eve receive%
\begin{equation}
\mathbf{z}\mathbf{=Hx+n}_{\text{B}},  \label{secrecy_signal}
\end{equation}%
\begin{equation}
\mathbf{y}\mathbf{=Gx}+\mathbf{n}_{\text{E}},  \label{Eve_signal}
\end{equation}%
respectively, where $\mathbf{n}_{\text{B}}\in \mathbb{C}^{N_{\text{B}}\times
1}$ and $\mathbf{n}_{\text{E}}\in \mathbb{C}^{N_{\text{E}}\times 1}$ are the
complex white Gaussian noise vectors with i.i.d. entries $\sim \mathcal{N}_{%
\mathbb{C}}(0$, $\sigma _{\text{B}}^{2})$ and $\mathcal{N}_{\mathbb{C}}(0$, $%
\sigma _{\text{E}}^{2})$, respectively. Assuming Bob and Eve are not
co-located, then the mutually independent matrices $\mathbf{H}\in \mathbb{C}%
^{N_{\text{B}}\times N_{\text{A}}}$ and $\mathbf{G}\in \mathbb{C}^{N_{\text{E%
}}\times N_{\text{A}}}$ represent the channels from Alice to Bob and Alice
to Eve, respectively, where the entries are assumed to be i.i.d. circularly
symmetric Gaussian random variable $\sim \mathcal{N}_{\mathbb{C}}(0$, $1)$.

We assume $N_{\text{B}}<N_{\text{A}}$, so that $\mathbf{H}$ has a
non-trivial null space generated by the columns of the matrix $\mathbf{Z}=$%
\;null($\mathbf{H}$). Let $\mathbf{u}$ be the secret data vector. Using the
artificial noise technique, Alice sends%
\begin{equation}
\mathbf{x=Pu+Zv,}  \label{T1}
\end{equation}%
where $\mathbf{P}$ is the precoding matrix and $\mathbf{v}$ is the
artificial noise generated by Alice. Considering uniform $M$-QAM
signalling, we have the secret data $\Re (\mathbf{u)}$ and $\Im (\mathbf{u)}%
\in \mathcal{C}^{N_{\text{B}}}$, where $\mathcal{C=}\{-\sqrt{M}+1$, $-\sqrt{M%
}+3$ \ldots,$\sqrt{M}-1\}$. The total transmit power is constrained to $P$,
i.e., E$[||\mathbf{x}||^{2}\mathbf{]}\leq P$.

Then, (\ref{secrecy_signal}) and (\ref{Eve_signal}) can be rewritten as%
\begin{align}
\mathbf{z}& \mathbf{=HPu+n}_{\text{B}}  \label{sec_mod2} \\
\mathbf{y}& \mathbf{=GPu+GZv+n}_{\text{E}}\text{\text{.}}  \label{Eve_mod2}
\end{align}

We consider the worse case for Alice where Eve not only knows the channel
matrices $\mathbf{H}$ and $\mathbf{G}$, but also knows the matrix $\mathbf{Z}
$ and the precoding matrix $\mathbf{P}$. Alice is assumed to know only $%
\mathbf{H}$. The SNR of Eve is defined as SNR$_{\text{E}}\triangleq P/\sigma
_{\text{E}}^{2}$. From (\ref{sec_mod2}) and (\ref{Eve_mod2}), through the
interference term $\mathbf{GZv}$, we can see that $\mathbf{v}$ affects Eve,
but not Bob. \vspace{-3mm}

\subsection{Lattice Preliminary}

An $n$-dimensional real \emph{lattice} in an $m$-dimensional Euclidean space
$\mathbb{R}^{m}$ ($n\leq m$) is the set of integer linear combinations of $n$
independent vectors:%
\begin{equation*}
\Lambda _{\mathbb{R}}=\left\{ \mathbf{Bu}\text{ : }\mathbf{u\in }\text{ }%
\mathbb{Z}^{n}\right\} \text{,}
\end{equation*}%
where $\mathbf{B=}\left[ \mathbf{b}_{1}\cdots \mathbf{b}_{n}\right] $ is a
\emph{basis} of the lattice $\Lambda _{\mathbb{R}}$.

In the following, we introduce some lattice parameters which are related to
this work.

A \emph{shortest vector} of $\Lambda _{\mathbb{R}}$ is a non-zero vector in $%
\Lambda _{\mathbb{R}}$ with the smallest Euclidean norm. The length of the
shortest vector is denoted by $\lambda _{1}\left( \mathbf{B}\right) $.

The \emph{Voronoi region} of a lattice point $\mathbf{x}_{i}$ is denoted by:%
\begin{equation*}
\mathcal{V}\left( \Lambda _{\mathbb{R}}\right) =\left\{ \mathbf{y}\in
\mathbb{R}^{m}\text{: }\Vert \mathbf{y}-\mathbf{x}_{i}\Vert \leq \Vert
\mathbf{y}-\mathbf{x}_{j}\Vert ,\forall \text{ }\mathbf{x}_{i}\neq \mathbf{x}%
_{j}\right\} .
\end{equation*}

The determinant of $\Lambda _{\mathbb{R}}$, $\det (\Lambda _{\mathbb{R}%
})\triangleq \sqrt{\det (\mathbf{B}^{T}\mathbf{B})}$, gives the $n$%
-dimensional volume of $\mathcal{V}\left( \Lambda _{\mathbb{R}}\right) $.

In Fig. 1, we illustrate two important lattice parameters which are related
to $\mathcal{V}\left( \Lambda _{\mathbb{R}}\right) $:

\begin{enumerate}
\item the \emph{effective radius} of $\Lambda _{\mathbb{R}}$, denoted by $r_{%
\text{eff}}(\Lambda _{\mathbb{R}})$, is the radius of a sphere $\mathcal{S}_{%
\text{eff}}(\Lambda _{\mathbb{R}})\,$of volume $\det (\Lambda _{\mathbb{R}})$
\cite{Zamir08}. For large $n$, it is approximately%
\begin{equation*}
r_{\text{eff}}(\Lambda _{\mathbb{R}})\approx \sqrt{n/(2\pi e)}\det (\Lambda
_{\mathbb{R}})^{1/n}\text{;}
\end{equation*}

\item the \emph{covering radius} of $\Lambda _{\mathbb{R}}$, denoted by $r_{%
\text{cov}}(\Lambda _{\mathbb{R}})$, is the radius of the smallest sphere
centred at a lattice point which covers $\mathcal{V}\left( \Lambda _{\mathbb{%
R}}\right) $.
\end{enumerate}

In wireless communication, it is common to use complex number representation
of signals. The real lattice definition can be extended to complex:%
\begin{equation*}
\Lambda _{\mathbb{C}}=\left\{ \mathbf{B}_{\mathbb{C}}\mathbf{u}_{\mathbb{C}}%
\text{ : }\mathbf{u}_{\mathbb{C}}\mathbf{\in }\text{ }\mathbb{Z}\left[ i%
\right] ^{n}\right\} \text{,}
\end{equation*}%
where $\mathbf{B}_{\mathbb{C}}\in \mathbb{C}^{m\times n}$ is a\emph{\ basis}
of the \emph{complex lattice} $\Lambda _{\mathbb{C}}$. There is a simple way
to represent $n$-dimensional complex lattices as $2n$-dimensional real
lattices \cite{BK:Conway93}. In this work, when we use the lattice
parameters of $\Lambda _{\mathbb{C}}$ (e.g., $\mathcal{V}\left( \Lambda _{%
\mathbb{C}}\right) $), we first convert $\Lambda _{\mathbb{C}}$ to the real
equivalent $\Lambda _{\mathbb{R}}$, and then apply the corresponding
definitions of $\Lambda _{\mathbb{R}}$.

From the lattice viewpoint, $\mathbf{GPu}$ in (\ref{Eve_mod2})\textbf{\ }can
be described as a point of the lattice with a basis $\mathbf{GP}$. The
detection of $\mathbf{u}$ fits in the lattice decoding scenario and can be
solved by sphere decoding \cite{viterbo}. In this paper, we assume the
worst-case for Alice and Bob, where Eve is able to perform maximum
likelihood decoding (e.g., by sphere decoding) to estimate $\mathbf{u}$,
even if the average complexity grows exponentially with the lattice
dimension.
\begin{figure}[tbp]
\centering
\par
\includegraphics[scale=0.4]{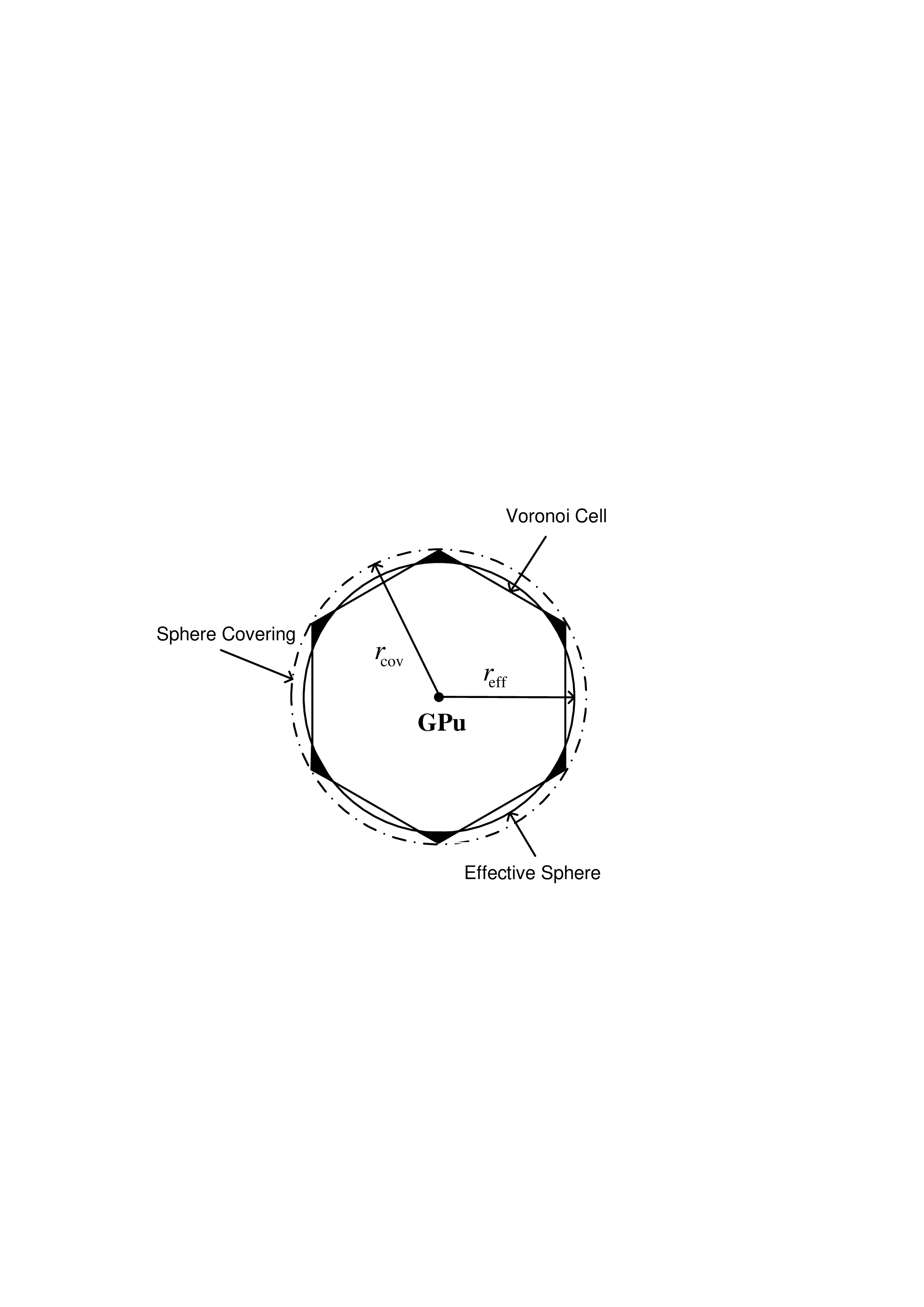} \vspace{-2mm}
\par
\caption{Voronoi Cell, Effective Radius and Covering
Radius.}\vspace{-5mm}
\end{figure}

\section{Precoding for Secure Communication}

In this Section, we analyze two different precoding schemes for the
artificial noise strategy: SVD precoding and lattice precoding.

For the MIMO scenario, the original artificial noise stragegy\emph{\ }\cite%
{Goel08} uses SVD precoding, where $\mathbf{H=U\Lambda V}^{T}$, $\mathbf{V=[V%
}_{1}$, $\mathbf{Z]}$ and $\mathbf{P=V}_{1}$. Due to the orthogonality
between $\mathbf{P}$ and $\mathbf{Z}$, from (\ref{T1}), the total
transmission power is%
\begin{equation}
||\mathbf{x}_{\text{SVD}}||^{2}=||\mathbf{u}||^{2}+||\mathbf{v}_{\text{SVD}%
}||^{2}.  \label{SVD_POW}
\end{equation}

Different from SVD precoding, lattice precoding \cite{Hochwald05} transmits%
\begin{equation}
\mathbf{x}_{\text{LP}}\mathbf{={\mathbf{H}^{\dag }{(\mathbf{u}-}}}A\mathbf{{{%
\mathbf{\hat{w}})}}+Zv}\text{\text{,}}
\end{equation}%
where $A=2\sqrt{M}$ and%
\begin{equation}
\mathbf{\hat{w}}=\arg \min_{\mathbf{w}\in \mathbb{Z}\left[ i\right] ^{N_{%
\text{B}}}}{\Vert \mathbf{H}^{\dag }{(\mathbf{u}-A\mathbf{w})}\Vert ^{2}}%
\text{.}  \label{min_power}
\end{equation}

Hence the corresponding transmission power is%
\begin{equation}
||\mathbf{x}_{\text{LP}}||^{2}=||{\mathbf{H}^{\dag }{(\mathbf{u}-A\mathbf{{%
\mathbf{\hat{w}}}})}}||^{2}+||\mathbf{v}_{\text{LP}}||^{2}\text{.}
\label{LP_POW}
\end{equation}

The search in (\ref{min_power}) requires the use of sphere decoder. To speed
up the search process, we apply the lattice reduction aided successive
interference cancellation (LR-SIC) precoding \cite{Windpassinger04}, in
which $\mathbf{\hat{w}}$ is approximated by Babai's nearest plane algorithm
\cite{Babai}.

In the next Section, we will show that lattice precoding outperforms SVD
precoding by requiring lower artificial noise power. \vspace{-5mm}

\section{Practical Secrecy}

In this Section, we propose a new artificial noise strategy to overcome the
limitation of $N_{\text{E}}<N_{\text{A}}$ and the assumption of Gaussian
artificial noise in \cite{Goel08}. Instead of targeting a non-zero secrecy
rate, the proposed scheme aims to maximize Eve's error probability $P_{\text{%
E}}\triangleq \Pr (\mathbf{\hat{u}}_{\text{E}}\neq \mathbf{u)}$. \vspace{-3mm%
}

\subsection{Practical Secrecy}

Let $\mathbf{\hat{u}}_{\text{E}}$ be the estimated secret message at Eve. We
propose a new measure of secrecy in terms of $P_{\text{E}}$.

\begin{definition}
We say \emph{practical secrecy} is achieved if for any SNR$_{\text{E}}$, $P_{%
\text{E}}\rightarrow 1$ exponentially as $N_{\text{B}}\rightarrow \infty $.
\end{definition}

The traditional secrecy capacity criterion, is based on the assumption of a
Gaussian input alphabet. On the contrary, \emph{practical secrecy} is
proposed for the practical communication systems, which make use of a finite
alphabet (e.g., $M$-QAM).

To the best of the authors' knowledge, no scheme has been proposed in the
literature to achieve \emph{practical secrecy}. In the following, we will
evaluate the relationship between \emph{practical secrecy} and $\mathbf{v}$
under the assumption that Eve can perform maximum likelihood decoding.%
\vspace{-3mm}

\subsection{Achieving Practical Secrecy}

Consider the lattice $\Lambda _{\mathbb{C}}$ with a basis $\mathbf{GP}$. The
decoding region of the target lattice point $\mathbf{GPu}$ is its associated
$\mathcal{V}\left( \Lambda _{\mathbb{C}}\right) $. Therefore, $P_{\text{E}}$
is determined by whether $\mathbf{y}$ in (\ref{Eve_mod2}) belongs to $%
\mathcal{V}\left( \Lambda _{\mathbb{C}}\right) $ or not. Let $\mathbf{\tilde{%
n}}_{\text{E}}=\mathbf{GZv+n}_{\text{E}}$ be Eve's generalized noise term.
For a given $\mathbf{v}$, the entries of $\mathbf{\tilde{n}}_{\text{E}}$ are
i.i.d. random variables $\sim \mathcal{N}_{\mathbb{C}}(0$, $\tilde{\sigma}_{%
\text{E}}^{2})$ with $\tilde{\sigma}_{\text{E}}^{2}=||\mathbf{v}%
||^{2}+\sigma _{\text{E}}^{2}$. A salient feature is that Eve's channel
noise $\mathbf{n}_{\text{E}}$ can help Alice to save on the artificial noise
power. In this work, we consider the worst-case scenario, i.e., $\sigma _{%
\text{E}}^{2}\rightarrow 0$, so that $P_{\text{E}}$ only depends on $\mathbf{%
GZv}$ and is independent of SNR$_{\text{E}}$.

As shown in Fig. 1, if the interference term $||\mathbf{GZv||}\geq r_{\text{%
cov}}(\Lambda _{\mathbb{C}})$, then $\mathbf{y}\notin \mathcal{V}\left(
\Lambda _{\mathbb{C}}\right) $, so that $P_{\text{E}}=1$. If%
\begin{equation}
\frac{r_{\text{cov}}(\Lambda _{\mathbb{C}})}{r_{\text{eff}}(\Lambda _{%
\mathbb{C}}\mathbf{)}}\geq \frac{||\mathbf{GZv||}}{r_{\text{eff}}(\Lambda _{%
\mathbb{C}}\mathbf{)}}>1\text{,}  \label{Cr_idear}
\end{equation}%
there are two cases: $P_{\text{E}}=1$ when $\mathbf{y}\in \mathcal{\bar{S}}_{%
\text{eff}}(\Lambda _{\mathbb{C}})-\mathcal{V}\left( \Lambda _{\mathbb{C}%
}\right) $ and $P_{\text{E}}=0$ when $\mathbf{y}\in \mathcal{V}\left(
\Lambda _{\mathbb{C}}\right) -\mathcal{S}_{\text{eff}}(\Lambda _{\mathbb{C}%
}) $ (the shaded corners), where $\mathcal{\bar{S}}_{\text{eff}}(\Lambda _{%
\mathbb{C}})$ is the complement of $\mathcal{S}_{\text{eff}}(\Lambda _{%
\mathbb{C}})$. As $||\mathbf{GZv||}$ approaches $r_{\text{cov}}(\Lambda _{%
\mathbb{C}})$, the shaded corners will disappear. In other words, Eve has a
higher error floor as $||\mathbf{GZv||}$ increases from $r_{\text{eff}%
}(\Lambda _{\mathbb{C}}\mathbf{)}$ to $r_{\text{cov}}(\Lambda _{\mathbb{C}})$%
. Note that the idea is directly applicable to lattice precoding, where the
target lattice point $\mathbf{G{\mathbf{P}}u}$ is simply replaced by the
lattice point $\mathbf{G}{\mathbf{H}^{\dag }}\left( \mathbf{u-}A\mathbf{\hat{%
w}}\right) $. The secret data therefore becomes $\mathbf{u-}A\mathbf{\hat{w}}
$.

Inspired by (\ref{Cr_idear}), we now introduce a new secrecy parameter
related to $P_{\text{E}}$.

\begin{definition}
The \emph{covering ratio} is defined as%
\begin{equation}
c_{\text{R}}\triangleq \frac{||\mathbf{GZv||}}{r_{\text{eff}}(\Lambda _{%
\mathbb{C}}\mathbf{)}}\text{.}  \label{Cr_def}
\end{equation}
\end{definition}

The $c_{\text{R}}$ and $P_{\text{E}}$ are related by the following theorem.

\begin{theo}
\label{Th3}For $c_{\text{R}}\geq \pi e$ and $N_{\text{B}}\rightarrow \infty $%
, $P_{\text{E}}=1$ for any value of SNR$_{\text{E}}$.
\end{theo}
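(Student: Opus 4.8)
The plan is to reduce the claim to a purely geometric comparison between the length of the interference vector $\mathbf{GZv}$ and the covering radius of Eve's decoding lattice, and then to control that covering radius through the effective radius. First I would invoke the worst case $\sigma_{\text{E}}^{2}\rightarrow 0$ already isolated in the text, so that Eve's observation is $\mathbf{y}=\mathbf{GPu}+\mathbf{GZv}$ and a decoding error occurs exactly when $\mathbf{GZv}\notin \mathcal{V}(\Lambda _{\mathbb{C}})$ (the Voronoi region translated to the transmitted point $\mathbf{GPu}$). Since $\mathcal{V}(\Lambda _{\mathbb{C}})$ is contained in the sphere of radius $r_{\text{cov}}(\Lambda _{\mathbb{C}})$ about $\mathbf{GPu}$, a single sufficient condition for $P_{\text{E}}=1$ is $\Vert \mathbf{GZv}\Vert \geq r_{\text{cov}}(\Lambda _{\mathbb{C}})$. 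Dividing by $r_{\text{eff}}(\Lambda _{\mathbb{C}})$ and recalling $c_{\text{R}}=\Vert \mathbf{GZv}\Vert /r_{\text{eff}}(\Lambda _{\mathbb{C}})$, it therefore suffices to establish
\[
\frac{r_{\text{cov}}(\Lambda _{\mathbb{C}})}{r_{\text{eff}}(\Lambda _{\mathbb{C}})}\leq \pi e \qquad \text{as } N_{\text{B}}\rightarrow \infty ,
\]
for then the hypothesis $c_{\text{R}}\geq \pi e$ forces $\Vert \mathbf{GZv}\Vert \geq \pi e\,r_{\text{eff}}\geq r_{\text{cov}}$, irrespective of the direction of $\mathbf{GZv}$ and of $\text{SNR}_{\text{E}}$.

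The core of the argument is thus the upper bound on $r_{\text{cov}}(\Lambda _{\mathbb{C}})$. I would pass to the $2N_{\text{B}}$-dimensional real model and apply the Gram--Schmidt/Babai bound $r_{\text{cov}}\leq \tfrac{1}{2}\left( \sum_{i}\Vert \mathbf{b}_{i}^{\ast }\Vert ^{2}\right) ^{1/2}$, where $\{\mathbf{b}_{i}^{\ast }\}$ are the Gram--Schmidt vectors of a reduced basis of $\Lambda _{\mathbb{C}}$ and $\prod_{i}\Vert \mathbf{b}_{i}^{\ast }\Vert =\det (\Lambda _{\mathbb{C}})$. Feeding this into the large-$n$ expression $r_{\text{eff}}\approx \sqrt{n/(2\pi e)}\,\det (\Lambda _{\mathbb{C}})^{1/n}$ with $n=2N_{\text{B}}$ converts the target ratio into the ratio of the arithmetic to the geometric mean of the squared Gram--Schmidt norms, up to the explicit factor $\pi e/2$. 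Exploiting that $\mathbf{GP}=\mathbf{GV}_{1}$ has i.i.d. Gaussian entries (rotation invariance applied to $\mathbf{V}_{1}$), so that $\Lambda _{\mathbb{C}}$ is full rank almost surely and the Gram--Schmidt profile of its reduced basis concentrates as the dimension grows, I would show this arithmetic-to-geometric-mean ratio stays bounded by a small constant almost surely; the limiting value of $r_{\text{cov}}/r_{\text{eff}}$ is then a constant comfortably below the threshold $\pi e$, which closes the chain of inequalities.

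The main obstacle is precisely this control of $r_{\text{cov}}(\Lambda _{\mathbb{C}})$: the covering radius is a worst-case (deepest-hole) quantity rather than a smooth statistic of the Gram matrix, so the Babai bound must be paired with a concentration statement for the singular values / Gram--Schmidt norms of the random matrix $\mathbf{GP}$, and one must verify that the constant genuinely lands under $\pi e$. A secondary point, easier but worth checking explicitly, is that $\sigma _{\text{E}}^{2}\rightarrow 0$ really is the worst case for Eve: adding independent channel noise only enlarges $\Vert \mathbf{\tilde{n}}_{\text{E}}\Vert $ in the mean, so proving $P_{\text{E}}=1$ there yields $P_{\text{E}}=1$ for every $\text{SNR}_{\text{E}}$, and the same reduction carries over verbatim to the lattice-precoding case once $\mathbf{GPu}$ is replaced by $\mathbf{G}\mathbf{H}^{\dagger }(\mathbf{u}-A\mathbf{\hat{w}})$.
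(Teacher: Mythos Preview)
Your geometric reduction---showing that $P_{\text{E}}=1$ whenever $\Vert\mathbf{GZv}\Vert\geq r_{\text{cov}}(\Lambda_{\mathbb{C}})$, and hence that the whole theorem follows once one proves $r_{\text{cov}}(\Lambda_{\mathbb{C}})/r_{\text{eff}}(\Lambda_{\mathbb{C}})\leq \pi e$ as $N_{\text{B}}\rightarrow\infty$---is exactly the skeleton the paper uses. The divergence is in how the covering radius is controlled. You bound $r_{\text{cov}}$ by the Babai/Gram--Schmidt inequality $r_{\text{cov}}\leq\tfrac{1}{2}\bigl(\sum_i\Vert\mathbf{b}_i^\ast\Vert^2\bigr)^{1/2}$ and then try to show that the arithmetic-to-geometric mean ratio of the $\Vert\mathbf{b}_i^\ast\Vert^2$ for a (reduced) basis of the random lattice $\mathbf{GP}$ concentrates below $2\pi e$. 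The paper instead invokes Banaszczyk's transference theorem
\[
r_{\text{cov}}(\mathbf{B})\leq \frac{N_{\text{B}}}{\lambda_1\bigl((\mathbf{B}^\dagger)^T\bigr)},
\]
and then applies Ajtai's result that for a random lattice $\lambda_1$ of the dual attains the Gaussian heuristic $\sqrt{N_{\text{B}}/(\pi e)}\,|\det((\mathbf{B}^\dagger)^T)|^{1/N_{\text{B}}}$ asymptotically. Plugging this in gives $r_{\text{cov}}(\mathbf{B})\leq \pi e\, r_{\text{eff}}(\mathbf{B})$ in one line, with the constant $\pi e$ appearing automatically.

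Your route is more elementary in that it avoids the transference machinery, but it leaves precisely the step you flag as the ``main obstacle'' unresolved: you never actually establish that the AM/GM ratio of the Gram--Schmidt profile is small enough, and for a reduced basis this is not a standard statement. The paper's use of Banaszczyk plus the random-lattice Gaussian heuristic for the dual shortest vector sidesteps that obstacle entirely and is what produces the exact threshold $\pi e$ that appears in the theorem.
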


\begin{proof}
See Appendix A.
\end{proof}

To apply Theorem \ref{Th3}, we need to find the sufficient condition of $c_{\text{R}%
}\geq \pi e$. Since $c_{\text{R}}$ is a random variable depending on the
random channel matrix $\mathbf{G}$, the problem then reduces to finding the
sufficient condition on $\Pr \{c_{\text{R}}<\beta \}\rightarrow 0$ for some $%
\beta >0$.

\begin{theo}
\label{Th1}For $N_{\text{B}}\rightarrow \infty $, let $||\mathbf{v}||=\beta
e/\Phi $, where%
\begin{eqnarray}
\Phi _{\text{LP}} &=&\left[ \frac{(N_{\text{E}}-N_{\text{B}})!}{(N_{\text{A}%
}-N_{\text{B}})!}\cdot \frac{N_{\text{A}}!}{N_{\text{E}}!}\right] ^{\frac{1}{%
2N_{\text{B}}}} \;\;\;\;\;\;%
\begin{array}{c}
\text{{\small for lattice}} \\
\text{{\small precoding}}%
\end{array}%
\label{Pi_LP} \\
\Phi _{\text{SVD}} &=&\left[ \frac{(N_{\text{E}}-N_{\text{B}})!}{N_{\text{E}%
}!}\cdot N_{\text{B}}{}^{1/2}\right] ^{\frac{1}{2N_{\text{B}}}} \;\;\;\;%
\begin{array}{c}
\text{{\small for SVD}} \\
\text{{\small precoding}}%
\end{array}%
\label{Pi_SVD}
\end{eqnarray}%
then%
\begin{equation}
\Pr \{c_{\text{R}}<\beta \}\leq O\left( e^{-\min (N_{\text{B}}^{2}/\log (N_{%
\text{B}})\text{, }N_{\text{E}})}\right) .
\end{equation}
\end{theo}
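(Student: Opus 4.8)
The plan is to substitute the large-$N_{\text{B}}$ effective-radius approximation $r_{\text{eff}}(\Lambda_{\mathbb{C}})\approx\sqrt{N_{\text{B}}/(\pi e)}\,D^{1/(2N_{\text{B}})}$, where $D$ denotes the (real) determinant of the lattice with basis $\mathbf{GP}$ (resp.\ $\mathbf{G}\mathbf{H}^{\dagger}$ for lattice precoding), i.e.\ $D=\det((\mathbf{GP})^{H}\mathbf{GP})$, and then rewrite the event $\{c_{\text{R}}<\beta\}$ using the prescribed $\|\mathbf{v}\|=\beta e/\Phi$. The key structural observation is that the numerator $\|\mathbf{GZv}\|$ and the determinant $D$ are \emph{independent}: the columns of $\mathbf{Z}=\mathrm{null}(\mathbf{H})$ are orthogonal to those of $\mathbf{P}$ (and to the row space of $\mathbf{H}$, hence to the columns of $\mathbf{H}^{\dagger}$), so for the i.i.d.\ Gaussian $\mathbf{G}$ the blocks $\mathbf{GZ}$ and $\mathbf{GP}$ are independent (their cross-covariance is proportional to $\mathbf{P}^{H}\mathbf{Z}=0$). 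This lets me treat the two random quantities separately and control $\{c_{\text{R}}<\beta\}$ by a union bound over the events ``numerator too small'' and ``denominator too large.''

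For the numerator, since $\mathbf{Z}$ has orthonormal columns, $\mathbf{GZv}/\|\mathbf{v}\|\sim\mathcal{N}_{\mathbb{C}}(0,\mathbf{I}_{N_{\text{E}}})$, so $\|\mathbf{GZv}\|^{2}=\|\mathbf{v}\|^{2}S$ with $S$ a Gamma$(N_{\text{E}})$ variate (a sum of $N_{\text{E}}$ unit-mean exponentials) concentrated at $E[S]=N_{\text{E}}$. A standard Chernoff bound on the lower tail gives $\Pr\{S<(1-\epsilon)N_{\text{E}}\}\le e^{-c_{1}N_{\text{E}}}$, which is the source of the $N_{\text{E}}$ term in the exponent.

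For the denominator, $(\mathbf{GP})^{H}\mathbf{GP}$ is a complex Wishart matrix, so $D\stackrel{d}{=}\prod_{k=0}^{N_{\text{B}}-1}Y_{k}$ with independent $Y_{k}\sim\text{Gamma}(N_{\text{E}}-k)$; in the lattice-precoding case the factor $\det(\mathbf{H}\mathbf{H}^{H})\stackrel{d}{=}\prod_{k=0}^{N_{\text{B}}-1}\text{Gamma}(N_{\text{A}}-k)$ enters in the denominator as well, and I would dispatch it by the same argument (this is exactly why $\Phi_{\text{LP}}$ carries the extra factorials $(N_{\text{A}}-N_{\text{B}})!$ and $N_{\text{A}}!$). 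Writing $\tfrac{1}{2N_{\text{B}}}\log D=\tfrac{1}{2N_{\text{B}}}\sum_{k}\log Y_{k}$, the products of shape parameters $\prod_{k}(N_{\text{E}}-k)=N_{\text{E}}!/(N_{\text{E}}-N_{\text{B}})!$ are precisely the factorials appearing in $\Phi$, and a Stirling/digamma estimate shows that $E[D]^{1/(2N_{\text{B}})}$ and the geometric mean $\exp(\tfrac{1}{2N_{\text{B}}}\sum_{k}\psi(N_{\text{E}}-k))$ agree to leading order (both $\sim\sqrt{N_{\text{B}}/e}$); the factor $e$ in $\|\mathbf{v}\|=\beta e/\Phi$ supplies the remaining slack so that the typical value of $c_{\text{R}}$ is a fixed constant multiple of $\beta$ strictly exceeding $\beta$. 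The fluctuation of $\tfrac{1}{2N_{\text{B}}}\log D$ has variance $\tfrac{1}{4N_{\text{B}}^{2}}\sum_{k}\psi'(N_{\text{E}}-k)=O(\log N_{\text{B}}/N_{\text{B}}^{2})$, so a Bernstein-type bound yields $\Pr\{D^{1/(2N_{\text{B}})}>(1+\epsilon')\,D_{\text{typ}}^{1/(2N_{\text{B}})}\}\le e^{-c_{2}N_{\text{B}}^{2}/\log N_{\text{B}}}$, the source of the $N_{\text{B}}^{2}/\log N_{\text{B}}$ term.

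Combining the two tail bounds by the union bound gives $\Pr\{c_{\text{R}}<\beta\}\le e^{-c_{1}N_{\text{E}}}+e^{-c_{2}N_{\text{B}}^{2}/\log N_{\text{B}}}=O(e^{-\min(N_{\text{B}}^{2}/\log N_{\text{B}},\,N_{\text{E}})})$. I expect the denominator step to be the main obstacle: one must control the large-deviation behaviour of a log-determinant whose small-shape Gamma factors (those with $N_{\text{E}}-k$ close to $N_{\text{E}}-N_{\text{B}}$, and in the LP case $N_{\text{A}}-k$) have the heaviest tails and dominate the variance, producing the $\log N_{\text{B}}$ loss; a secondary care point is verifying that the factorial normalization $\Phi$, together with the factor $e$, indeed places $c_{\text{R}}$ above $\beta$ uniformly over the admissible range $N_{\text{A}},N_{\text{E}}\ge N_{\text{B}}$.
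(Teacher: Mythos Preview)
Your proposal is correct and follows essentially the same architecture as the paper: split $c_{\text{R}}$ into the independent factors $\|\mathbf{GZv}\|$ (chi-type, giving the $e^{-cN_{\text{E}}}$ tail) and the lattice determinant (log-normal-type with variance $O(\log N_{\text{B}}/N_{\text{B}}^{2})$, giving the $e^{-cN_{\text{B}}^{2}/\log N_{\text{B}}}$ tail), then union-bound. The paper makes the LP independence explicit via the QR factorization $\mathbf{H}^{T}=\mathbf{Q}_{\text{H}}\mathbf{R}_{\text{H}}$, writing $c_{\text{R}}$ in terms of the three mutually independent quantities $\|\mathbf{GZv}\|$, $\det(\mathbf{R}_{\text{H}})$, and $\det(\mathbf{GQ}_{\text{H}})$; your Wishart/Bartlett decomposition $D=\prod_{k}Y_{k}$ achieves the same separation. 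The only methodological difference is that for the determinant tail the paper invokes the Tao--Vu CLT for $\log|\det(\cdot)|$ and then applies a Gaussian tail bound to the limiting normal, whereas you go directly with a Bernstein-type inequality on $\sum_{k}\log Y_{k}$; your route is arguably cleaner, since it avoids the implicit step of passing tail bounds through a distributional limit.
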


\begin{proof}
See Appendix B.
\end{proof}

From Theorem \ref{Th3} with $\beta =\pi e$ and \ref{Th1}, the convergence behavior
of $\Pr \{c_{\text{R}}<\pi e\}$ implies $P_{\text{E}}\rightarrow 1$
exponentially as $N_{\text{B}}\rightarrow \infty $.

\begin{remark}
\label{Rm1}\emph{Practical secrecy }is achieved when $||\mathbf{v}||\geq \pi
e^{2}/\Phi $, where $\Phi $ is given in (\ref{Pi_LP}) and (\ref{Pi_SVD}),
depending on precoders.
\end{remark}

\begin{remark}
\label{Rm2}Since $\Phi _{\text{SVD}}<\Phi _{\text{LP}}$, $||\mathbf{v}_{%
\text{SVD}}||>||\mathbf{v}_{\text{LP}}||$.
\end{remark}

As shown above, \emph{practical secrecy} is only related to $||\mathbf{v}||$%
. However, if $\mathbf{v}$ is an integer vector, the term $\mathbf{\tilde{x}%
=GPu+GZv}$ in (\ref{Eve_mod2}) can be viewed as a lattice point of $\tilde{%
\Lambda}_{\mathbb{C}}$ with a basis $[\mathbf{GP}$, $\mathbf{GZ]}$, so that
Eve may be able to recover $\mathbf{\tilde{x}}$ by using sphere decoding. To
avoid this, we generate a continuous random vector $\mathbf{v}$, so that $%
\mathbf{\tilde{x}}$ can never be a lattice point of $\tilde{\Lambda}_{%
\mathbb{C}}$, hence can not be detected.

Since \emph{practical secrecy} requires that $P_{\text{E}}$
approaches 1 exponentially, it implies that even for small values of
$N_{\text{B}}$ the $P_{\text{E}}$ is very close to 1.
The simulation in the following section shows that our analysis is applicable
to a real system with finite numbers of antennas, even with $N_{\text{E}}>N_{\text{A}%
}$.\vspace{-2mm}

\section{Simulation Results}

This section examines the performance of the proposed artificial noise
scheme in the most favorable case for Eve, i.e., SNR$_{\text{E}}\rightarrow
\infty $. We construct $\mathbf{v}$ in two steps: 1) generating a vector
with $N_{\text{A}}-N_{\text{B}}$ uniformly distributed random variables; 2)
normalizing the length of the random vector to $\beta e/\Phi $.

Fig. 2 shows the error performances at Bob and Eve for an uncoded system
using $64$-QAM with $N_{\text{A}}=10$, $N_{\text{B}}=9$ and $N_{\text{E}}=20$%
. Reference \cite{Goel08} argued that the non-zero secrecy rate can not be
guaranteed when $N_{\text{A}}<N_{\text{E}}$. Nevertheless, our \emph{%
practical secrecy} criterion provides the opportunity to protect $\mathbf{u}
$ in these scenarios. With $\beta =1$, the result in Fig. 2 shows $P_{\text{E%
}}=1$, even when $N_{\text{A}}-N_{\text{B}}=1$. We find that $\beta =1$ is
already good enough in practice. $\Pr (c_{\text{R}}<\beta )$ is shown to
decay very fast. Observe that the performance of lattice precoding is
considerably better than that of SVD precoding.


\begin{figure}[tbp]
\centering
\includegraphics[scale=0.35]{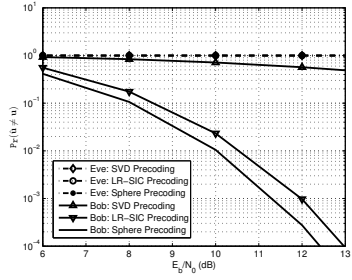}\vspace{-2mm}
\caption{$\Pr \left( \mathbf{\hat{u}\neq u}\right) $ vs. Bob's average SNR
per bit for the uncoded MIMO system with $N_{\text{A}}=10$, $N_{\text{B}}=9$%
, $N_{\text{E}}=20$, 64-QAM and SNR$_{\text{E}}\rightarrow \infty $.
} \label{fig:10}\vspace{-5mm}
\end{figure}
\vspace{-3mm}

\section{Conclusions}

In this paper, we have shown how the artificial noise can force Eve's
received signal to settle around the borders of the decision region, so that
the \emph{practical secrecy} can be achieved. Of particular interest is that
even if only one degree of freedom is used for artificial noise ($N_{\text{A}%
}-N_{\text{B}}=1$) and Eve has unlimited resources ($N_{\text{E}}>N_{\text{A}%
}$), the data can still be protected. The connection between secrecy
capacity and \emph{practical secrecy}, as well as the effect of
finite $N_{\text{B}}$, will be investigated in the future work.
\vspace{-2mm}

\section*{Acknowledgment}

The authors would like to thank Professor Terence Tao for his constructive
comments and pointing out the references. \vspace{-2mm}

\section*{Appendix}

\subsection{Proof of Theorem 1}

Let $\mathbf{B}=\mathbf{GP}$. We recall the fact \cite{Banaszczyk}:%
\begin{equation}
r_{\text{cov}}(\mathbf{B})\leq \frac{N_{\text{B}}}{\lambda _{1}((\mathbf{B}%
^{\dagger })^{T})}\text{.}  \label{cover_r_upper}
\end{equation}

It is known that for a random lattice basis $\mathbf{B}$ and $N_{\text{B}%
}\rightarrow \infty $, $\lambda _{1}((\mathbf{B}^{\dagger })^{T})$ converges
to \cite{Ajtai02}%
\begin{equation}
\sqrt{N_{\text{B}}/(\pi e)}\left\vert \det ((\mathbf{B}^{\dagger
})^{T})\right\vert ^{1/N_{\text{B}}}\text{.}  \label{lam_gh_app}
\end{equation}%
Consequently, the right hand side of (\ref{cover_r_upper}) tends towards%
\begin{equation}
\frac{\sqrt{\pi eN_{\text{B}}}}{\left\vert \det ((\mathbf{B}^{\dagger
})^{T})\right\vert ^{1/N_{\text{B}}}}=\pi er_{\text{eff}}(\mathbf{B)}\text{.}
\label{CR_UB}
\end{equation}%
If $c_{\text{R}}>\pi e$, using (\ref{Cr_def}), we have
$||\mathbf{GZv||>}r_{\text{cov}}(\mathbf{B})$,
which means $\mathbf{y}\notin \mathcal{V}\left( \Lambda
_{\mathbb{C}}\right) $, so that $P_{\text{E}}=1$. \QEDA

\vspace{-2mm}

\subsection{Proof of Theorem 2}

\emph{1) Lattice precoding:} Performing the QR decomposition yields $\mathbf{%
H}^{T}=\mathbf{Q}_{\text{H}}\mathbf{R}_{\text{H}}$, where $\mathbf{Q}_{\text{%
H}}$ has orthogonal columns and $\mathbf{R}_{\text{H}}$ is an upper
triangular matrix with nonnegative diagonal elements. We have%
\begin{equation}
c_{\text{R}}=\frac{||\mathbf{GZv}||\det (\mathbf{R}_{\text{H}}\mathbf{)}%
^{1/N_{\text{B}}}}{\sqrt{\frac{N_{\text{B}}}{\pi e}}\left\vert \det (\mathbf{%
GQ}_{\text{H}}\mathbf{)}\right\vert ^{1/N_{\text{B}}}}.  \label{r2}
\end{equation}%
We recall the facts that $\mathbf{Q}_{\text{H}}$ is independent of $\mathbf{R%
}_{\text{H}}$ \cite{Verdu04}, and $\mathbf{GZ}$ and $\mathbf{GQ}_{\text{H}}$
are mutually independent Gaussian random matrices \cite{Lukacs54}.
Consequently, $||\mathbf{GZv}||$, $\det (\mathbf{GQ}_{\text{H}}\mathbf{)}%
^{1/N_{\text{B}}}$ and $\det (\mathbf{R}_{\text{H}}\mathbf{)}^{1/N_{\text{B}%
}}$ are mutually independent random variable. It is easy to verify that $%
\frac{\sqrt{2}}{||\mathbf{v}||}||\mathbf{GZv}||$ is a $\mathcal{X}$
distributed random variable with $2N_{\text{E}}$ degrees of freedom, i.e.,%
\begin{equation}
\frac{\sqrt{2}}{||\mathbf{v}||}||\mathbf{GZv}||\rightarrow \mathcal{X(}2N_{%
\text{E}}).  \label{1}
\end{equation}

According to \cite{Tao12}\ , for $N_{\text{B}}\rightarrow \infty $, we have%
\begin{equation}
\frac{\log |\det (\mathbf{R}_{\text{H}})|-1/2\log \frac{N_{\text{A}}!}{(N_{%
\text{A}}-N_{\text{B}})!}+1/4\log (N_{\text{B}})}{1/2\sqrt{\log (N_{\text{B}%
})}}\rightarrow \mathcal{N}(0,1).  \label{dist_gs11}
\end{equation}%
Multiplying the numerator and denominator in (\ref{dist_gs11}) by $1/N_{%
\text{B}}$, we obtain%
\begin{equation}
\det (\mathbf{R}_{\text{H}}\mathbf{)}^{1/N_{\text{B}}}\rightarrow e^{%
\mathcal{N}(0,\frac{1}{4}N_{\text{B}}^{-2}\log (N_{\text{B}}))}\left[ \frac{%
N_{\text{A}}!}{N_{\text{B}}^{1/2}(N_{\text{A}}-N_{\text{B}})!}\right] ^{%
\frac{1}{2N_{\text{B}}}}.  \label{2}
\end{equation}%
Similarly, since $\mathbf{GQ}_{\text{H}}$ is a Gaussian random matrix, for $%
N_{\text{B}}\rightarrow \infty $, we have%
\begin{equation}
\det (\mathbf{GQ}_{\text{H}}\mathbf{)}^{1/N_{\text{B}}}\rightarrow e^{%
\mathcal{N}(0,\frac{1}{4}N_{\text{B}}^{-2}\log (N_{\text{B}}))}\left[ \frac{%
N_{\text{E}}!}{N_{\text{B}}^{1/2}(N_{\text{E}}-N_{\text{B}})!}\right] ^{%
\frac{1}{2N_{\text{B}}}}.  \label{3}
\end{equation}

According to (\ref{1}), (\ref{2}) and (\ref{3}), the right hand side of (\ref%
{r2}) converges to the product of two random variables $f_{\mathcal{X}}\cdot
f_{\mathcal{N}}$, where%
\begin{eqnarray}
f_{\mathcal{X}} &=&\frac{||\mathbf{v}||}{\sqrt{2N_{\text{B}}}}\Phi _{\text{LP%
}}\mathcal{X(}2N_{\text{E}})\text{,}  \notag \\
f_{\mathcal{N}} &=&\sqrt{\pi e}\exp (\mathcal{N}(0,\frac{1}{2}N_{\text{B}%
}^{-2}\log (N_{\text{B}})))\text{,} 
\end{eqnarray}
with $\Phi _{\text{LP}}$ given in (\ref{Pi_LP}).
Now we compute the probability of $f_{\mathcal{X}}\cdot f_{\mathcal{N}}\leq
\beta $. We have%
\begin{equation}
\Pr \left\{ f_{\mathcal{X}}\cdot f_{\mathcal{N}}\leq \beta \right\} \leq \Pr
\{f_{\mathcal{X}}\leq \beta \}+\Pr \{f_{\mathcal{N}}\leq 1\}.
\end{equation}

We first compute the probability of $f_{\mathcal{N}}\leq 1$:
\[
\Pr \{f_{\mathcal{N}} \leq 1\}=\Pr \left\{ \mathcal{N}(0,\frac{1}{2}N_{%
\text{B}}^{-2}\log (N_{\text{B}}))\leq -\frac{1}{2}\log \pi e\right\}
\]
\begin{equation}
 \leq 1/2\exp \left( -\frac{N_{\text{B}}^{2}\log ^{2}\pi e}{4\log (N_{\text{%
B}})}\right)  \leq O\left( e^{-N_{\text{B}}^{2}/\log (N_{\text{B}})}\right) \text{.}
\end{equation}
Then, we compute the probability of $f_{\mathcal{X}}\leq \beta $:%
\begin{equation}
\Pr \{f_{\mathcal{X}}\leq \beta \}=\Pr \left\{ \mathcal{X}^{2}\mathcal{(}2N_{%
\text{E}})\leq \frac{2\beta ^{2}N_{\text{B}}}{||\mathbf{v}||^{2}\Phi _{\text{%
LP}}^{2}}\right\} \text{.}  \label{b2}
\end{equation}%
Let $||\mathbf{v}||=\frac{\beta e}{\Phi _{\text{LP}}}$. Since $\frac{2\beta
^{2}N_{\text{B}}}{||\mathbf{v}||^{2}\Phi _{\text{LP}}^{2}}=\frac{2N_{\text{B}%
}}{e^{2}}<2N_{\text{E}}$, we have%
\begin{equation}
\Pr \left\{ \mathcal{X}^{2}\mathcal{(}2N_{\text{E}})\leq \frac{2N_{\text{B}}%
}{e^{2}}\right\} \leq (\gamma e^{1-\gamma })^{N_{\text{E}}},
\end{equation}%
where $\gamma =\frac{N_{\text{B}}}{e^{2}N_{\text{E}}}$. It is easy to show
that%
\begin{equation}
\Pr \left\{ \mathcal{X}^{2}\mathcal{(}2N_{\text{E}})\leq \frac{2N_{\text{B}}%
}{e^{2}}\right\} \leq \left[ \frac{e^{2}N_{\text{E}}}{e^{1-\gamma }N_{\text{B%
}}}\right] ^{-N_{\text{E}}}\leq O(e^{-N_{\text{E}}}).
\end{equation}

Therefore, with $||\mathbf{v}||=\frac{\beta e}{\Phi _{\text{LP}}}$%
\begin{eqnarray}
\Pr \{c_{\text{R}} &<&\beta \}=\Pr \{f_{\mathcal{X}}\cdot f_{\mathcal{N}%
}<\beta \}  \notag \\
&\leq &O\left( e^{-\min (N_{\text{B}}^{2}/\log (N_{\text{B}})\text{, }N_{%
\text{E}})}\right) .
\end{eqnarray}

\emph{2) SVD precoding: }The proof is similar to the above. \QEDA

\vspace{-2mm}
\bibliographystyle{IEEEtran}

\end{document}